%%%%%%%%%%%%%%%%%%%% author.tex %%%%%%%%%%%%%%%%%%%%%%%%%%%%%%%%%%%
%
% sample root file for your "contribution" to a contributed volume
%
% Use this file as a template for your own input.
%
%%%%%%%%%%%%%%%% Springer %%%%%%%%%%%%%%%%%%%%%%%%%%%%%%%%%%

% RECOMMENDED %%%%%%%%%%%%%%%%%%%%%%%%%%%%%%%%%%%%%%%%%%%%%%%%%%%
\documentclass[graybox]{svmult}

% choose options for [] as required from the list
% in the Reference Guide

\usepackage{url}
\usepackage{type1cm}        % activate if the above 3 fonts are
                            % not available on your system
%
\usepackage{makeidx}         % allows index generation
\usepackage{graphicx}        % standard LaTeX graphics tool
                             % when including figure files
\usepackage{multicol}        % used for the two-column index
\usepackage[bottom]{footmisc}% places footnotes at page bottom

\usepackage{newtxtext}       % 

\usepackage{enumerate}
\usepackage{cite}
\usepackage{textcomp}
\usepackage{amsmath, amssymb, amsfonts}
\usepackage{mathrsfs}
\usepackage{mathtools}
\usepackage{dsfont}
\usepackage{accents}
\usepackage{color}
\usepackage{caption}

\DeclareMathOperator{\minimize}{minimize}

\DeclareMathOperator{\diag}{diag}
\DeclareMathOperator{\voi}{VoI}
\DeclareMathOperator{\EXP}{\mathsf{E}}
\DeclareMathOperator{\Cov}{\mathsf{cov}}

\DeclareMathOperator{\ProbM}{\mathsf{P}}
\DeclareMathOperator{\Prob}{\mathsf{p}}

% see the list of further useful packages
% in the Reference Guide

\makeindex             % used for the subject index
                       % please use the style svind.ist with
                       % your makeindex program

%%%%%%%%%%%%%%%%%%%%%%%%%%%%%%%%%%%%%%%%%%%%%%%%%%%%%%%%%%%%%%%%%%%%%%%%%%%%%%%%%%%%%%%%%

\begin{document}

\renewcommand\footnotemark{}
\title*{Consistency of Value of Information: Effects\\of Packet Loss and Time Delay in Networked Control Systems Tasks \thanks{\hspace{-4.2mm}Corresponding Author: Touraj Soleymani (touraj@imperial.ac.uk). To be published as a book chapter by \emph{Springer}.}}
\titlerunning{Consistency of Value of Information: Effects of Packet Loss and Time Delay}
% Use \titlerunning{Short Title} for an abbreviated version of
% your contribution title if the original one is too long
\author{Touraj Soleymani, John S. Baras, Siyi Wang, Sandra Hirche, and Karl H. Johansson}
% Use \authorrunning{Short Title} for an abbreviated version of
% your contribution title if the original one is too long
\institute{Touraj Soleymani \at University of London, United Kingdom
\and John S. Baras \at University of Maryland, United States
\and Siyi Wang \at Technical University of Munich, Germany
\and Sandra Hirche \at Technical University of Munich, Germany
\and Karl H. Johansson \at Royal Institute of Technology, Sweden}
%
% Use the package "url.sty" to avoid
% problems with special characters
% used in your e-mail or web address
%
\maketitle

\vspace{-10mm}
\abstract{In this chapter, we study the consistency of the value of information---a semantic metric that claims to determine the right piece of information in networked control systems tasks---in a lossy and delayed communication regime. Our analysis begins with a focus on state estimation, and subsequently extends to feedback control. To that end, we make a causal tradeoff between the packet rate and the mean square error. Associated with this tradeoff, we demonstrate the existence of an optimal policy profile, comprising a symmetric threshold scheduling policy based on the value of information for the encoder and a non-Gaussian linear estimation policy for the decoder. Our structural results assert that the scheduling policy is expressible in terms of $3d-1$ variables related to the source and the channel, where $d$ is the time delay, and that the estimation policy incorporates no residual related to signaling. We then construct an optimal control policy by exploiting the separation principle.}

\section{Introduction}
Packet loss and time delay are two common network imperfections in networked control systems tasks that can severely degrade the overall system performance or even yield instability. For that reason, it is imperative that any semantics-aware methodology~\cite{uysal2022semantic}, developed to integrate the communication purpose into the design process, possesses the capability to counteract the detrimental effects of these imperfections. In this chapter, we study the consistency of the value of information~\cite{voi, voi2, touraj-thesis, soleymaniCUP}---a semantic metric that claims to determine the right piece of information in networked control systems tasks---in a lossy and delayed communication regime. Our analysis begins with a focus on state estimation, and subsequently extends to feedback control. This modularized exposition, in particular, enables us to illustrate the applicability of the value of information to remote monitoring. Note that remote monitoring, in which sensory measurements of a stochastic source are transmitted in real-time by an encoder to a decoder that estimates the state of the source, is the most basic task arising in the context of networked control systems~\cite{hespanha2007survey, nair2007feedback, touraj-power}.

The base of our study is a causal tradeoff between the packet rate and the mean square error, which we formulate as a stochastic optimization problem with an encoder and a decoder as two distributed decision makers. This problem for the joint design of the encoder and the decoder is a team decision-making problem, with a non-classical information structure subject to a signaling effect, that is nonconvex and in general intractable. Despite these perplexities, our main objective is to characterize a globally optimal policy profile composed of a scheduling policy and an estimation policy to be used by the encoder and the decoder, respectively. We then attempt to construct an optimal control policy by exploiting the separation principle. Analogous to the original findings in~\cite{voi, voi2, touraj-thesis}, we will quantify the value of information at each time as the variation in a value function defined from the perspective of the encoder with respect to a piece of sensory measurement that can be communicated to the decoder at that~time.

\subsection{Literature Survey}
The causal tradeoff between the packet rate and the mean square error has been examined in the literature~\cite{imer2010, lipsa2011, lipsa2009optimal, molin2017, chakravorty2016, chak2016loss, chak2017loss, rabi2012, guo2021-IT, guo2021-TAC, sun2019}. Notably, Imer and Basar \cite{imer2010} studied the estimation of a scalar Gauss--Markov process based on dynamic programming by considering a symmetric threshold scheduling policy, and obtained the optimal threshold value. Lipsa and Martins~\cite{lipsa2011} analyzed the estimation of a scalar Gauss--Markov process based on majorization theory, and proved that the optimal scheduling policy is symmetric and the optimal estimation policy is linear. This work was extended to estimation over an independent and identically distributed (i.i.d.) packet-erasure channel~in \cite{lipsa2009optimal}, where a similar structural result was found. In addition, Molin and Hirche~\cite{molin2017} investigated the convergence properties of an iterative algorithm for the estimation of a scalar Markov process with symmetric noise distribution, and found a result coinciding with that in~\cite{lipsa2011}. Chakravorty and Mahajan~\cite{chakravorty2016} studied the estimation of a scalar autoregressive Markov process with symmetric noise distribution based on renewal theory, and proved that the optimal scheduling policy is symmetric and the optimal estimation policy is linear. This work was generalized to estimation over an i.i.d.~packet-erasure channel and a Gilbert--Elliott packet-erasure channel~in \cite{chak2016loss,chak2017loss}, where a similar structural result was found. Moreover, Rabi~\emph{et~al.}~\cite{rabi2012} formulated the estimation of the scalar Wiener process and a scalar Ornstein--Uhlenbeck process as an optimal multiple stopping time problem by discarding the signaling effect, and showed that the optimal scheduling policy is symmetric. Later, Guo and Kostina \cite{guo2021-IT} made a contribution by studying the estimation of the scalar Wiener process and a scalar Ornstein--Uhlenbeck process in the presence of the signaling effect, obtained a result that matches with that in~\cite{rabi2012}, and showed that, under some assumptions, a sign-of-innovation code can be used as the optimal compression policy. The authors also looked at the estimation of the scalar Wiener process over a fixed-delay lossless channel in the presence of the signaling effect in~\cite{guo2021-TAC}, and obtained similar structural results. Finally, Sun~\emph{et~al.}~\cite{sun2019} studied the estimation of the scalar Wiener process over a random-delay lossless channel by discarding the signaling effect, and showed that the optimal scheduling policy is symmetric. Note that, in these studies, the scheduling policies are observation-based, as they take advantage of realized sensory information.

There is a related body of research in the literature about the effects of packet loss and time delay on stability of estimation~\cite{sinopoli, wu2017, quevedo2013, schenato2008delay, gupta2009d}. Notably, in a seminal work, Sinopoli~\emph{et~al.}~\cite{sinopoli} studied mean-square stability of Kalman filtering over an i.i.d.~packet-erasure channel, and proved that there exists a critical point for the packet-loss probability above which the expected estimation error covariance is unbounded. Wu~\emph{et~al.}~\cite{wu2017} addressed peak-covariance stability of Kalman filtering over a Gilbert-Elliott packet-erasure channel, and proved that there exists a critical region defined by the recovery and failure rates outside which the expected prediction error covariance is unbounded. Quevedo~\emph{et~al.}~\cite{quevedo2013} investigated mean-square stability of Kalman filtering over a fading packet-erasure channel with correlated gains, and established a sufficient condition that ensures the exponential boundedness of the expected estimation error covariance. Schenato~\cite{schenato2008delay} studied mean-square stability of Kalman filtering over a random-delay i.i.d.~packet-erasure channel, obtained the optimal structure of the estimator at the decoder, and showed that there exists a critical value for the packet-loss probability, similar to what was found in~\cite{sinopoli}, that is independent of the time delay. Note that, in these studies, it is assumed that raw measurements of the sensor are periodically transmitted by the encoder to the decoder. Differently, Gupta~\emph{et~al.}~\cite{gupta2009d} investigated the estimation of a Gauss-Markov process over a packet-erasure channel as a subproblem, and showed that transmitting the minimum mean-square-error state estimate at the encoder at each time leads to the maximal information set for the decoder. The authors also obtained a necessary and sufficient condition for the packet-loss probability of an i.i.d.~packet-erasure channel that guarantees the boundedness of the expected estimation error~covariance.

\subsection{Contributions and Organization}
In this chapter, based on the results established in~\cite{erasure2023}, we first study the fundamental performance limit of state estimation of a partially observable Gauss--Markov process over a fixed-delay packet-erasure channel, by making a causal tradeoff. Associated with this tradeoff, we demonstrate the existence of an optimal policy profile, comprising a symmetric threshold scheduling policy based on the value of information for the encoder and a non-Gaussian linear estimation policy for the decoder. Our structural results assert that the scheduling policy is expressible in terms of $3d-1$ variables related to the source and the channel, where $d$ is the time delay, and that the estimation policy incorporates no residual related to signaling. Then, by adopting the results in~\cite{voi,voi2}, we show that the optimal control policy in the corresponding feedback control problem is a certainty-equivalent policy that exploits the derived estimation policy. Finally, through a numerical example, we demonstrate that the optimal scheduling policy based on the value of information not only transmits measurements less frequently when the estimation discrepancy is small, but transmits more frequently and more persistently when the estimation discrepancy is large due to packet losses and time~delay.

We should remark that our study is different from the previous studies on state estimation of stochastic sources over communication channels~\cite{imer2010, lipsa2011, lipsa2009optimal, molin2017, chakravorty2016, chak2016loss, chak2017loss, rabi2012, guo2021-IT, guo2021-TAC, sun2019,kushner, meier1967, mywodespaper, soleymani2016-cdc, leong2017, leong2018, witsenhausen1979, walrand1983, yuksel2012, borkar2001, khina2018t, tanaka2016,sinopoli, wu2017, quevedo2013, schenato2008delay, gupta2009d}. In particular, it differs from the studies in~\cite{imer2010, lipsa2011, molin2017, chakravorty2016, rabi2012, guo2021-IT}, which consider only ideal channels; from those in~\cite{lipsa2009optimal, chak2016loss, chak2017loss}, which are restricted to scalar sources and delay-free packet-erasure channels; from those in \cite{sun2019, guo2021-TAC}, which are restricted to scalar sources and delayed lossless channels; from those in~\cite{kushner, meier1967, mywodespaper, soleymani2016-cdc,leong2017, leong2018}, which are restricted to variance-based scheduling policies and delay-free packet-erasure channels; and from those in \cite{witsenhausen1979, walrand1983, borkar2001, yuksel2012}, which focus on compression policies rather than scheduling policies. Our results here apply to multi-dimensional sources and fixed-delay packet-erasure channels, without any limitations on the information structure or the policy structure, and deliver an observation-based scheduling policy and a recursive estimation policy that are jointly optimal. Moreover, in contrast to the results in~\cite{sinopoli, wu2017, quevedo2013, schenato2008delay, gupta2009d}, which provide conditions guaranteeing stability of estimation, our results provide a basis for obtaining the minimum packet rate that guarantees a given level of estimation performance. Lastly, contrary to the studies in~\cite{sinopoli, wu2017, quevedo2013, schenato2008delay}, where the message that is transmitted at each time is the output of the sensor at that time, in our study, the message that can be transmitted at each time, similar to what was adopted in \cite{gupta2009d, leong2017}, is the minimum mean-square-error state estimate at the encoder at that time. As we will see, transmitting this state estimate, which is provided by a Kalman filter\footnote{In the literature, a sensor capable of running a Kalman filter in real time is often referred to as smart sensor.}, yields the best performance.

The chapter is organized into six sections. In addition to the current introductory section, we formulate the causal tradeoff problem in Section~\ref{sec2}. We present our main results on characterization of optimal policies in the presence of packet loss and time delay in Sections~\ref{sec3} and \ref{sec4}. We then provide a numerical example related to a spin-stabilized spacecraft in Section~\ref{sec5}. Finally, we conclude the chapter in Section~\ref{sec6}.

\section{A Tradeoff in the Lossy and Delayed Communication Regime}\label{sec2}
In practice, it is often the case that not all states of a dynamical process can be determined by direct observation. State estimation aims at inferring the hidden states of a dynamical process based on the available measurements. This inference is in fact a cornerstone of any subsequent decision-making. In the following, we focus on state estimation of a partially observable process over a lossy and delayed channel by considering simultaneously costs of communication and estimation.

Suppose that time is discretized into time slots over a finite time horizon $T$ such that the duration of each time slot is constant. Consider a sensor observing the states of a partially observable Gauss--Markov process. Accordingly, the source model is expressed by the state and output equations
\begin{align}
	x(k+1) &= A(k) x(k) + w(k) \label{eq:sys}\\[1.5\jot]
	y(k) &= C(k) x(k) + v(k) \label{eq:sens}
\end{align}
for $k \in \mathbb{N}_{[0,T]}$ with initial condition $x(0)$, where $x(k) \in \mathbb{R}^n$ is the state of the source, $A(k) \in \mathbb{R}^{n \times n}$ is the state matrix, $w(k) \in \mathbb{R}^n$ is a Gaussian white noise with zero mean and covariance $W(k) \succ 0$, $y(k) \in \mathbb{R}^m$ is the output of the sensor, $C(k) \in \mathbb{R}^{m \times n}$ is the output matrix, and $v(k) \in \mathbb{R}^m$ is a Gaussian white noise with zero mean and covariance $V(k) \succ 0$. 
\begin{assumption}
For the source model in (\ref{eq:sys}) and (\ref{eq:sens}), the following assumptions are satisfied:
\begin{enumerate}[(i)]
	\item The initial condition $x(0)$ is a Gaussian vector with mean $m(0) \in \mathbb{R}^n$ and covariance $M(0) \succ 0$.
	\item The random variables $x(0)$, $w(t)$, and $v(s)$ for $t,s \in \mathbb{N}_{[0,T]}$ are mutually independent, i.e., $\Prob(x(0), w(0\!:\!T), v(0\!:\!T)) = \Prob(x(0)) \prod_{k=0}^{T} \Prob(w(k))$ $\times \prod_{k=0}^{T} \Prob(v(k))$.
\end{enumerate}
\end{assumption}

The sensor is connected over a fixed-delay packet-erasure channel to a monitor where an estimate of the state of the source, represented by $\hat{x}(k)$, is computed at each time $k$. Associated with this networked system, there are an encoder and a decoder. Let $\sigma(k) \in \{0,1\}$ be a binary variable such that $\sigma(k) = 1$ if a message containing sensory information, represented by $\check{x}(k)$, is transmitted by the encoder to the decoder at time $k$, and $\sigma(k) = 0$ otherwise. A transmitted message at time $k$ is successfully received by the decoder with probability $1 - \lambda(k)$ and after fixed $d$-step delay, where $d \in \mathbb{N}_{[1,T]}$. Let $\gamma(k) \in \{0,1\}$ be a binary random variable modeling packet loss such that $\gamma(k) = 0$ if a packet loss occurs at time $k$, and $\gamma(k) = 1$ otherwise. Then, the probability of $\gamma(k)=0$ is $\lambda(k)$. Accordingly, the channel model is expressed by the input-output relation
\begin{align}\label{eq:ch-model}
z(k+d) = \left\{
  \begin{array}{l l}
     \check{x}(k), & \ \text{if} \ \sigma(k) = 1 \ \wedge \ \gamma(k) =1, \\[1\jot]
     \mathfrak{D}, & \ \text{otherwise}
  \end{array} \right.
\end{align}
for $k \in \mathbb{N}_{[0,T]}$ with $z(0), \dots, z(d-1) = \mathfrak{D}$ by convention, where $z(k)$ is the output of the channel and $\mathfrak{D}$ represents packet loss or absence of transmission. Upon a successful delivery, the decoder transmits a packet acknowledgment to the encoder via a feedback channel.

\begin{assumption}
For the communication channel model in (\ref{eq:ch-model}), the following assumptions are satisfied:
\begin{enumerate}[(i)]
	\item The packet-loss probabilities $\lambda(k)$ for $k \in \mathbb{N}_{[0,T]}$ are random variables forming a Markov chain, i.e., $\Prob(\lambda(k) | \lambda(0:k-1)) = \Prob(\lambda(k) | \lambda(k-1))$.
	\item The packet-loss probability $\lambda(k)$ is known at the encoder and the decoder at each time $k$.
	\item The random variables $\gamma(k)$ for $k \in \mathbb{N}_{[0,T]}$ are mutually independent given the respective packet-loss probabilities, i.e., $\Prob(\gamma(0:T) | \lambda(0:T)) = \prod_{k=0}^{T}$ $\Prob(\gamma(k) | \lambda(k))$.
	\item Quantization error is negligible, i.e., in a successful transmission, real-value sensory information can be conveyed from the encoder to the decoder without any error.
	\item The feedback channel is ideal, i.e., packet acknowledgments are received by the encoder without any loss, delay, or error.
\end{enumerate}
\end{assumption}

\begin{assumption}
The following factorization among the variables associated with the source and the channel holds:
\begin{align}
&\Prob \big(x_0, w(0:T), v(0:T), \gamma(0:T), \lambda(0:T) \big) \nonumber\\[2\jot]
&\qquad = \Prob \big(x_0, w(0:T), v(0:T) \big) \Prob \big(\gamma(0:T) \big| \lambda(0:T) \big)  \Prob \big(\lambda(0:T) \big).
\end{align}
\end{assumption}
\vspace{0.3cm}

In this state estimation scenario, the decision variables are $\sigma(k)$ and $\hat{x}(k)$ for all $k \in \mathbb{N}_{[0,T]}$, which are decided by the encoder and the decoder, respectively. Let the information sets of the encoder and the decoder be expressed by $\mathcal{I}(k) = \{ y(t), z(t), \lambda(t), \sigma(s), \hat{x}(s), \gamma(r) | \ t \in \mathbb{N}_{[0,k]}, s \in \mathbb{N}_{[0,k-1]}, r \in \mathbb{N}_{[0,k-d]} \}$ and $\mathcal{J}(k) = \{ z(t), \lambda(t), \hat{x}(s) \ | \  t \in \mathbb{N}_{[0,k]}, s \in \mathbb{N}_{[0,k-1]} \}$, respectively. We say that a policy profile $(\epsilon,\delta)$ consisting of a scheduling policy $\epsilon$ and an estimation policy $\delta$ is admissible if $\epsilon = \{ \ProbM(\sigma(k) | \mathcal{I}(k)) \}_{k=0}^{T}$ and $\delta = \{ \ProbM(\hat{x}(k) | \mathcal{J}(k)) \}_{k=0}^{T}$, where $\ProbM(\sigma(k) | \mathcal{I}(k))$ and $\ProbM(\hat{x}(k) | \mathcal{J}(k))$ are stochastic kernels. 

We would like to find a globally optimal solution $(\epsilon^\star,\delta^\star)$ to the causal tradeoff between the packet rate and the mean square error, which is cast by the following stochastic optimization problem:
\begin{align}\label{eq:main_problem1}
	&\underset{\epsilon \in \mathcal{E},\delta\in \mathcal{D}}{\minimize} \ \Phi
\end{align}
where $\mathcal{E}$ and $\mathcal{D}$ are the sets of admissible scheduling policies and admissible estimation policies, respectively, and 
\begin{align}\label{eq:loss-function}
\Phi := \EXP \bigg[ \sum_{k=0}^{T} \theta(k) \sigma(k) + \sum_{k=0}^{T} \big( x(k) - \hat{x}(k)\big)^T \Lambda(k) \big( x(k) - \hat{x}(k)\big) \bigg]
\end{align}
for the weighting coefficient $\theta(k) \geq 0$ and the weighting matrix $\Lambda(k) \succ 0$, subject to the source model in (\ref{eq:sys}) and (\ref{eq:sens}) and the channel model in (\ref{eq:ch-model}).

\begin{remark}
Note that the loss function in (\ref{eq:loss-function}), when divided by $T$, incorporates two criteria. The packet-rate criterion, which often appears in the analysis of packet-switching networks, evaluates the cost of communication, while the mean-square-error criterion, which often appears in the analysis of control systems, evaluates the quality of real-time estimation. Moreover, note that the causal tradeoff in~(\ref{eq:main_problem1}) is a team decision-making problem with a non-classical information structure subject to a signaling effect. Although this problem is nonconvex and in general intractable, in the next section we characterize a globally optimal solution $(\epsilon^\star,\delta^\star)$, which leads to the determination of the fundamental performance limit of the underlying networked~system.
\end{remark}

\section{Characterization of an Optimal Solution}\label{sec3}
In this section, we present our theoretical results on the characterization of an optimal solution, which consists of policies for the encoder and the decoder. As we will observe, the optimal design of the encoder and the decoder is generally intertwined. However, we will overcome this hindrance by seeking a separation in the design of these entities. We first show in the next lemma that the conditional means $\EXP[x(k) | \mathcal{I}(k)]$ and $\EXP[x(k) | \mathcal{J}(k)]$ are instrumental for the tasks carried out by the encoder and the decoder.
\begin{lemma}[\hspace{-0.01mm}\cite{erasure2023}]
Without loss of optimality, at each time $k$, one can adopt $\check{x}(k) = \EXP[x(k) | \mathcal{I}(k)]$ as the message that can be transmitted by the encoder, and $\hat{x}(k) = \EXP[x(k) | \mathcal{J}(k)]$ as the state estimate that can be computed by the decoder.
\label{lemma:0}
\end{lemma}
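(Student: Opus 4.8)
The plan is to decouple the two claims and dispatch the decoder first, since its optimality is cleaner, and then analyze the encoder's message against a decoder already pinned to a conditional mean. For the decoder, I would fix an arbitrary admissible scheduling policy together with the induced messages and channel outputs, and note that $\hat{x}(k)$ enters the loss $\Phi$ only through the stage terms $\EXP\big[(x(k)-\hat{x}(k))^T \Lambda(k)(x(k)-\hat{x}(k))\big]$, as the packet-rate term does not involve $\hat{x}$. Because $\Lambda(k) \succ 0$, the orthogonality principle shows that, among all $\mathcal{J}(k)$-measurable estimators, this weighted mean square error is uniquely minimized by $\hat{x}(k) = \EXP[x(k) \mid \mathcal{J}(k)]$.

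The complication is that $\hat{x}(s)$ for $s \le k-1$ appears inside both $\mathcal{J}(k)$ and $\mathcal{I}(k)$, so altering the estimation policy perturbs the information driving every downstream decision; this signaling coupling is the main obstacle. I would resolve it by backward induction on $k$ from $T$ to $0$, replacing one estimate at a time. At $k = T$ the estimate feeds no later information set, so the pointwise minimizer is substituted with no side effect. For the inductive step, the key observation is that $\EXP[x(k) \mid \mathcal{J}(k)]$ is a deterministic function of $\mathcal{J}(k)$; since $\mathcal{J}(k) \subseteq \mathcal{J}(s)$ and $\mathcal{J}(k) \subseteq \mathcal{I}(s)$ for $s > k$ (the decoder nests its own histories, and the encoder can reconstruct $z(0\!:\!k)$ through the ideal acknowledgments), inserting this estimate into the future information sets is redundant and leaves the generated $\sigma$-algebras unchanged. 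Any extra variability that a more general randomized estimate might inject is independent of the source and the channel, and hence can lower neither the estimation cost nor the value of any scheduling decision, so the substitution is without loss of optimality.

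For the encoder, I would exploit the linear-Gaussian structure. Because the encoder always observes the full measurement record $y(0\!:\!k)$, while the remaining entries of $\mathcal{I}(k)$, namely $z$, $\sigma$, $\gamma$, $\lambda$, and the decoder's past estimates, are, conditioned on $y(0\!:\!k)$, functions of channel randomness that is independent of the source, the conditional law $\Prob(x(k) \mid \mathcal{I}(k))$ is Gaussian with mean $\EXP[x(k) \mid \mathcal{I}(k)]$ and a covariance that evolves by a deterministic Riccati recursion, independent of the realized values. Consequently $\EXP[x(k) \mid \mathcal{I}(k)]$ is a sufficient statistic for $x(k)$ given $\mathcal{I}(k)$, and by the Markov property and linearity it also fixes $\EXP[x(k') \mid \mathcal{I}(k)]$ for every $k' > k$ through the product of state matrices. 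I would emphasize that, by contrast, the decoder's posterior is generally non-Gaussian because the scheduling is observation-dependent, which is precisely why only the mean, and not a deterministic covariance, is asserted for the decoder.

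Finally, I would argue that transmitting this sufficient statistic is optimal for the decoder: for any admissible message $\check{x}(k) = g(\mathcal{I}(k))$, the source posterior at the decoder, and hence its attainable mean square error for every $x(k')$, can be no smaller than when the conditional mean is transmitted, since the conditional mean already carries all of the encoder's information about the source trajectory. Combining this with the decoder result then yields that the pair $(\check{x}(k), \hat{x}(k)) = \big(\EXP[x(k) \mid \mathcal{I}(k)], \EXP[x(k) \mid \mathcal{J}(k)]\big)$ attains the optimal cost. I expect the delicate point to be the rigorous justification that a sufficient statistic of the encoder loses nothing even though the scheduling decision $\sigma(k)$ itself signals information; the deterministic-covariance property is exactly what rules out any gain from transmitting a non-mean message.
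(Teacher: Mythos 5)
Your decomposition---conditional-mean optimality for the decoder, then sufficiency of the encoder's Kalman estimate via the Gaussian posterior with data-independent covariance---is the same route taken in the literature this lemma is drawn from, and your observations that $\mathcal{J}(k)\subset\mathcal{I}(k)$ and that the encoder's error covariance is deterministic are indeed the right ingredients. On the decoder side, one repair is needed: invariance of the generated $\sigma$-algebras is not by itself enough to make your inductive substitution ``side-effect free.'' Downstream scheduling rules are functions that take the realized value $\hat{x}(k)$ as an argument, so if you hold them literally fixed while changing that value, every later decision $\sigma(s)=g_s(\ldots,\hat{x}(k),\ldots)$ changes, and with it all later costs. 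You must simultaneously re-wire each downstream rule to compose with the old estimation rule $f_k(\mathcal{J}(k))$---possible precisely because $\mathcal{J}(k)\subseteq\mathcal{I}(s)\cap\mathcal{J}(s)$ for $s\ge k$---so that the realized trajectories of $\sigma$, $z$, $\gamma$ are unchanged and only the stage-$k$ term decreases. This is a standard fix and your $\sigma$-algebra remark is what licenses it, but as written the induction step does not go through.

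The substantive gap is on the encoder side: your final pointwise claim---that under any fixed admissible profile, swapping the message $g(\mathcal{I}(k))$ for $\check{x}(k)$ cannot hurt the decoder---is false in general, and the deterministic-covariance property does not rescue it. With observation-based scheduling, the decoder's posterior is built not only from received messages but also from the signaling content of non-receptions, and a richer message can make that signaling strictly more informative. Sketch: scalar source, $d=1$; suppose the message sent at time $1$ is delivered, and the (admissible, if strange) scheduling rule at time $2$ is $\sigma(2)=\mathds{1}_{|\alpha y(0)+\nu(2)|>c}$. If the time-$1$ message was $(\check{x}(1),y(0))$, then upon observing no delivery at time $3$ the decoder learns that $\nu(2)$ lies in a known interval, which is strictly finer information about $x(3)$ than when the message was $\check{x}(1)$ alone: there the interval is blurred by the residual uncertainty in $y(0)$, which becomes correlated with $x(3)$ once one conditions on the non-delivery event. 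So $\check{x}(k)$ is a sufficient statistic for the source given $\mathcal{I}(k)$, but it is \emph{not} a sufficient statistic for the decoder's joint inference over the source and the future scheduling decisions. Consequently the lemma cannot be established by exchanging messages under a fixed scheduling policy; the argument has to treat the joint optimization---for instance, lower-bound the cost by a genie system in which a successful delivery at time $t$ reveals all of $\mathcal{I}(t)$, establish structural properties of the optimal scheduling policy in that system, and only then conclude that transmitting $\check{x}(t)$ attains the bound. That joint step is the missing idea in your proposal.
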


As a result, we can safely use $\check{x}(k) = \EXP[x(k) | \mathcal{I}(k)]$ and $\hat{x}(k) = \EXP[x(k) | \mathcal{J}(k)]$. Define the innovation at the encoder $\nu(k) := y(k) - C(k) \EXP [x(k) | \mathcal{I}(k-1)]$, the estimation error at the decoder $\hat{e}(k) := x(k) - \hat{x}(k)$, and the estimation mismatch $\tilde{e}(k) := \check{x}(k) - \hat{x}(k)$. We characterize in the next two lemmas the recursive equations that the optimal estimators at the encoder and the decoder must~satisfy.

\begin{lemma}[\hspace{-0.01mm}\cite{stoccontrol}]
The minimum mean-square-error estimator at the encoder satisfies the recursive~equations
\begin{align}
	\check{x}(k) &= m(k) + K(k) \big( y(k) - C(k) m(k) \big) \label{eq:est-KF-xhat} \\[2\jot]
	m(k) &= A(k-1) \check{x}(k-1) \label{eq:est-KF-m}\\[1\jot]
	O(k) &= \big( M(k)^{-1} + C(k)^T V(k)^{-1} C(k) \big)^{-1}\\[1\jot]
	M(k) &= A(k-1) O(k-1) A(k-1)^T + W(k-1)
\end{align}
for $k \in \mathbb{N}_{[1,T]}$ with initial conditions $\check{x}(0) = m(0) + K(0)(y(0) - C(0) m(0))$ and $O(0) = (M(0)^{-1} + C(0)^T V(0)^{-1} C(0))^{-1}$, where $m(k) = \EXP[ x(k) | \mathcal{I}(k-1)]$, $O(k) = \Cov[x(k) | \mathcal{I}(k)]$, $M(k) = \Cov[x(k) | \mathcal{I}(k-1)]$, and $K(k) = O(k) C(k)^T V(k)^{-1}$.
\label{lemma:1}
\end{lemma}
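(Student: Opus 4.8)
The plan is to reduce the encoder's estimation problem to a classical Kalman filtering problem by showing that, from the encoder's vantage point, none of the side information in $\mathcal{I}(k)$ beyond the raw measurement stream $y(0\!:\!k)$ carries any information about the current state $x(k)$. Concretely, I would first establish that
\[
\check{x}(k) = \EXP[x(k) \mid \mathcal{I}(k)] = \EXP[x(k) \mid y(0\!:\!k)],
\]
and only then invoke the standard linear-Gaussian recursions---cited here as \cite{stoccontrol}---to obtain (\ref{eq:est-KF-xhat})--(\ref{eq:est-KF-m}) together with the two covariance recursions. Here I take $\check{x}(k) = \EXP[x(k) \mid \mathcal{I}(k)]$ as given, since this is the minimum mean-square-error estimator by definition and by Lemma~\ref{lemma:0}.

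First I would prove, by induction on $k$, that every element of $\mathcal{I}(k)$ other than the measurements $y(0\!:\!k)$ is a measurable function of $y(0\!:\!k)$, of the channel variables $\gamma(0\!:\!k-d)$ and $\lambda(0\!:\!k)$, and of whatever exogenous randomization seeds the (possibly stochastic) policy kernels employ. Indeed, by the channel model (\ref{eq:ch-model}) each $z(t)$ equals either $\check{x}(t-d)$ or $\mathfrak{D}$, selected by $\sigma(t-d)$ and $\gamma(t-d)$; each $\sigma(s)$ is drawn from the kernel $\ProbM(\sigma(s) \mid \mathcal{I}(s))$; and each $\hat{x}(s)$ is drawn from $\ProbM(\hat{x}(s) \mid \mathcal{J}(s))$ with $\mathcal{J}(s) \subseteq \mathcal{I}(s)$. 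Feeding the inductive hypothesis for $\mathcal{I}(s)$, $s < k$, through these three relations closes the induction.

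Next I would use the factorization of Assumption~3 to convert this into a conditional-independence statement. The source variables $(x_0, w(0\!:\!T), v(0\!:\!T))$, which jointly determine the pair $(x(k), y(0\!:\!k))$, are independent of the channel variables $(\gamma(0\!:\!T), \lambda(0\!:\!T))$, and the randomization seeds are exogenous to all of these. Hence, conditioned on $y(0\!:\!k)$, the additional contents of $\mathcal{I}(k)$---being functions of $y(0\!:\!k)$ and these independent quantities---are conditionally independent of $x(k)$, which yields the displayed collapse of the conditioning set. With this in hand, the pair $(x,y)$ obeying the linear-Gaussian model (\ref{eq:sys})--(\ref{eq:sens}) under Assumption~1 makes $\EXP[x(k) \mid y(0\!:\!k)]$ and $\Cov[x(k) \mid y(0\!:\!k)]$ the output of the standard Kalman filter, delivering precisely the stated recursions and initial conditions.

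I expect the main obstacle to be this inductive/conditional-independence step, since it is exactly where the signaling effect must be neutralized. The scheduling decisions are observation-based, so the events $\{\sigma(s)=1\}$ and the channel outputs $z$ do in general encode information about the source---this is precisely what will complicate the decoder's problem later. The point to argue carefully is that such signaling is \emph{inert} for the encoder: the encoder itself generates $\sigma$ from $\mathcal{I}$ and receives $\gamma$ only through the ideal acknowledgment feedback of Assumption~2, so its side information is causally synthesized from its own measurements and source-independent channel noise and cannot sharpen $\EXP[x(k) \mid y(0\!:\!k)]$. Making this watertight amounts to coupling the recursion of the first step with the factorization of Assumption~3; once it is in place, the concluding Kalman step is entirely routine.
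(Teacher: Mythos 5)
Your proposal is correct, and it actually supplies more than the paper does: the paper states this lemma with only a citation to a standard stochastic control text and gives no proof of its own, since the recursions are the classical (information-form) Kalman filter. The nontrivial content in this networked setting is exactly the step you isolate---that conditioning on the full encoder information set $\mathcal{I}(k)$ (which contains channel outputs $z$, past scheduling decisions $\sigma$, decoder estimates $\hat{x}$, and loss indicators $\gamma$) collapses to conditioning on the raw measurements $y(0\!:\!k)$. A textbook Kalman filter theorem covers only the latter conditioning, so your inductive argument (every extra element of $\mathcal{I}(k)$ is a measurable function of $y(0\!:\!k)$, the source-independent channel variables, and exogenous randomization seeds) combined with the factorization of Assumption~3 is precisely the bridge the paper's citation implicitly assumes. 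Your conditional-independence step is sound: since $(x(k), y(0\!:\!k))$ is a function of the source variables alone and these are independent of $(\gamma,\lambda,\text{seeds})$, one gets $x(k) \perp (\gamma,\lambda,\text{seeds}) \mid y(0\!:\!k)$, and the tower property then collapses the conditioning set; the use of $\mathcal{J}(s) \subseteq \mathcal{I}(s)$ to close the induction over $\hat{x}(s)$ is also the right observation. Your remark that signaling is inert for the encoder (while it is the decoder's central difficulty, as Lemma~\ref{lemma:x2} shows via the residual $\varsigma$) correctly identifies why this lemma is routine while its decoder counterpart is not.
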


\begin{lemma}[\hspace{-0.01mm}\cite{erasure2023}]
The minimum mean-square-error estimator at the decoder satisfies the recursive equation
\begin{align}\label{est-monitor}
	&\hat{x}(k) =  \sigma(k-d) \gamma(k-d) \bigg( \prod_{t=1}^{d} A(k-t) \bigg) \check{x}(k-d) \nonumber\\[0.5\jot]
	&\qquad \ \ \ + \big( 1-\sigma(k-d) \gamma(k-d) \big) \big( A(k-1) \hat{x}(k-1) + \varsigma(k-1) \big)
\end{align}
for $k \in \mathbb{N}_{[d,T]}$ with initial conditions $\hat{x}(\tau) = (\prod_{t=1}^{\tau} A(\tau-t)) m(0)$ for $\tau \in \mathbb{N}_{[0,d-1]}$, where $\varsigma(k-1) = A(k-1) \EXP[\hat{e}(k-1) | \mathcal{J}(k-1), \sigma(k-d) \gamma(k-d) = 0]$ is a signaling residual.
\label{lemma:x2}
\end{lemma}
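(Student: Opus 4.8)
\section*{Proof proposal}

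The plan is to start from the decoder's defining relation $\hat{x}(k)=\EXP[x(k)\,|\,\mathcal{J}(k)]$, granted by Lemma~\ref{lemma:0}, and to propagate it by conditioning on the single new piece of information that enters $\mathcal{J}(k)$ at time $k$, namely the channel output $z(k)$ together with $\lambda(k)$. Since $z(k)=\check{x}(k-d)$ exactly when $\sigma(k-d)\gamma(k-d)=1$ and $z(k)=\mathfrak{D}$ otherwise, I would split the derivation into a reception branch and a no-reception branch and then recombine them through the indicator $\sigma(k-d)\gamma(k-d)$. In both branches I first unroll the dynamics, writing $x(k)=\big(\prod_{t=1}^{d}A(k-t)\big)x(k-d)+\sum_{j=1}^{d}\big(\prod_{t=1}^{j-1}A(k-t)\big)w(k-j)$ for the reception branch and $x(k)=A(k-1)x(k-1)+w(k-1)$ for the no-reception branch. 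By the mutual independence in Assumption~1 and the factorization in Assumption~3, each noise term $w(k-1),\dots,w(k-d)$ is independent of $\mathcal{J}(k)$, since these noises drive the state strictly after time $k-d$ whereas every source-bearing quantity in $\mathcal{J}(k)$ is a function of the state and sensor noise up to time $k-d$; hence each such term contributes zero to the conditional mean.

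For the reception branch, the crux is to show $\EXP[x(k-d)\,|\,\mathcal{J}(k)]=\check{x}(k-d)$. Here I would argue that, conditioned on $\sigma(k-d)\gamma(k-d)=1$, the source-relevant content of $\mathcal{J}(k)$ is a sub-$\sigma$-algebra $\mathcal{H}$ of the encoder information $\mathcal{I}(k-d)$: every received value $z(t)=\check{x}(t-d)$ for $t\le k$ is, by Lemma~\ref{lemma:1}, measurable with respect to $\mathcal{I}(t-d)\subseteq\mathcal{I}(k-d)$; and the reception/erasure pattern reveals only the products $\sigma(t-d)\gamma(t-d)$, whose scheduling factors are $\mathcal{I}(k-d)$-measurable and whose erasure factors $\gamma(\cdot)$, together with all $\lambda(\cdot)$, are independent of the source by Assumption~3. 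Because $\check{x}(k-d)=\EXP[x(k-d)\,|\,\mathcal{I}(k-d)]$ is itself $\mathcal{H}$-measurable with $\mathcal{H}\subseteq\mathcal{I}(k-d)$, the tower property yields $\EXP[x(k-d)\,|\,\mathcal{H}]=\EXP[\check{x}(k-d)\,|\,\mathcal{H}]=\check{x}(k-d)$, and peeling off the source-independent remainder of $\mathcal{J}(k)$ leaves this unchanged. Combined with the noise observation above, this gives the branch $\hat{x}(k)=\big(\prod_{t=1}^{d}A(k-t)\big)\check{x}(k-d)$.

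For the no-reception branch, I would write $\mathcal{J}(k)=\mathcal{J}(k-1)\vee\{\sigma(k-d)\gamma(k-d)=0\}\vee\sigma(\lambda(k))$ and decompose $x(k-1)=\hat{x}(k-1)+\hat{e}(k-1)$. Since $\hat{x}(k-1)$ is $\mathcal{J}(k-1)$-measurable and $\lambda(k)$ is source-independent, conditioning gives $\EXP[x(k-1)\,|\,\mathcal{J}(k)]=\hat{x}(k-1)+\EXP[\hat{e}(k-1)\,|\,\mathcal{J}(k-1),\sigma(k-d)\gamma(k-d)=0]$. Multiplying through by $A(k-1)$ and adding the zero-mean contribution of $w(k-1)$ produces exactly $A(k-1)\hat{x}(k-1)+\varsigma(k-1)$ with $\varsigma(k-1)$ as defined in the statement; the residual is generically nonzero precisely because the event $\{\sigma(k-d)\gamma(k-d)=0\}$ carries signaling through its scheduling factor. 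Recombining the two branches with the indicator yields the recursion~\eqref{est-monitor}, and the initial conditions follow because for $\tau\in\mathbb{N}_{[0,d-1]}$ no packet can have arrived ($z(0),\dots,z(d-1)=\mathfrak{D}$ by convention), so $\mathcal{J}(\tau)$ carries no source information and $\hat{x}(\tau)=\EXP[x(\tau)]=\big(\prod_{t=1}^{\tau}A(\tau-t)\big)m(0)$.

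The step I expect to be the main obstacle is the sufficiency claim in the reception branch, namely certifying that the decoder's side information beyond the received estimate cannot sharpen $\EXP[x(k-d)\,|\,\mathcal{J}(k)]$ below $\check{x}(k-d)$. The delicate part is the bookkeeping of the information sets under the $d$-step delay: one must verify that the newer channel realizations $\gamma(k-2d{+}1\!:\!k-d)$ and probabilities $\lambda(\cdot)$ entering $\mathcal{J}(k)$ are genuinely source-independent, so that they can be peeled off, while the scheduling decisions they are entangled with remain $\mathcal{I}(k-d)$-measurable, so that the clean collapse to $\check{x}(k-d)$ via the tower property is justified.
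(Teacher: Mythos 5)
There is nothing in the chapter to compare against: Lemma~\ref{lemma:x2} is imported from \cite{erasure2023} with no proof given, so your argument can only be judged on its own terms. On those terms it is sound, and it is surely the natural route: split on the reception indicator $\sigma(k-d)\gamma(k-d)$, which is $\mathcal{J}(k)$-measurable because $z(k)\neq\mathfrak{D}$ exactly on reception; unroll the dynamics over the last $d$ (resp.\ one) steps; kill the process-noise terms by independence, justified since every source-bearing generator of $\mathcal{J}(k)$ involves source randomness only up to time $k-d$; and obtain the residual in the no-reception branch from $x(k-1)=\hat{x}(k-1)+\hat{e}(k-1)$, which reproduces exactly the definition of $\varsigma(k-1)$. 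The initial conditions are handled correctly as well.

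The one step you flag---certifying $\EXP[x(k-d)\,|\,\mathcal{J}(k)]=\check{x}(k-d)$ on the reception event---closes more cleanly if you route the argument through the encoder's error rather than through sub-$\sigma$-algebras of $\mathcal{I}(k-d)$. Set $\check{e}(k-d):=x(k-d)-\check{x}(k-d)$. By Lemma~\ref{lemma:1}, $\check{x}(k-d)$ is the Kalman estimate, a linear function of $y(0\!:\!k-d)$, and by joint Gaussianity of the source variables the error $\check{e}(k-d)$ is independent of $y(0\!:\!k-d)$; since $\check{e}(k-d)$ is a function of the source variables alone, Assumption~3 upgrades this to independence from the whole triple $\big(y(0\!:\!k-d),\gamma(0\!:\!k-d),\lambda(0\!:\!k)\big)$. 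Every generator of $\mathcal{J}(k)$---each $z(t)$ with $t\le k$, each $\lambda(t)$, each $\hat{x}(s)$, and recursively each scheduling decision $\sigma(t-d)$---is a measurable function of that triple, so $\EXP[\check{e}(k-d)\,|\,\mathcal{J}(k)]=0$ almost surely, with no case analysis on the erasure pattern. Hence $\EXP[x(k-d)\,|\,\mathcal{J}(k)]=\EXP[\check{x}(k-d)\,|\,\mathcal{J}(k)]$, and on the reception event $\check{x}(k-d)=z(k)$ is $\mathcal{J}(k)$-measurable, giving the branch $\big(\prod_{t=1}^{d}A(k-t)\big)\check{x}(k-d)$ outright. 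This sidesteps the genuine awkwardness you identified: the past products $\sigma(t-d)\gamma(t-d)$ are neither $\mathcal{I}(k-d)$-measurable (because of $\gamma$) nor source-independent (because of $\sigma$), so they cannot be assigned to either side of your $\mathcal{H}$-versus-remainder decomposition; conditioning on the larger algebra generated by the triple above avoids having to make that assignment at all.
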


\begin{remark}
The results of Lemmas~\ref{lemma:1} and \ref{lemma:x2} have shown that the conditional distribution $\ProbM({x}(k) | \mathcal{I}(k))$ is Gaussian and the optimal estimator at the encoder is linear, while the conditional distribution $\ProbM({x}(k) | \mathcal{J}(k))$ is in general non-Gaussian and the optimal estimator at the decoder is in general nonlinear. The nonlinearity of the optimal estimator at the decoder is in fact caused by the signaling residual $\varsigma(k)$, which is defined when $z(k)$ is equal to $\mathfrak{D}$. The existence of this term implies that the decoder might be able to decrease its uncertainty even when no message is transmitted or when a packet loss occurs. Finally, it is worth mentioning that $\mathcal{J}(k) \subset \mathcal{I}(k)$. Therefore, both $\check{x}(k)$ and $\hat{x}(k)$ are known at the encoder at each time $k$.
\end{remark}

We introduce in the next definition a value function by adopting the estimation policy $\delta^\star$ that is constructed based on $\EXP[x(k) | \mathcal{J}(k)]$.
\begin{definition}[Value function]
The value function $V(k,\mathcal{I}(k))$ associated with the loss function $\Phi$ and the information set $\mathcal{I}(k)$ is defined as
\begin{align}
	V(k,\mathcal{I}(k)) :=& \min_{\epsilon \in \mathcal{E} : \delta = \delta^\star}\EXP \bigg[ \sum_{t=k}^{T-d} \theta(t) \sigma(t) + \sum_{t=k}^{T-d} \hat{e}(t+d)^T \hat{e}(t+d) \Big| \mathcal{I}(k) \bigg] \label{eq:Ve-def}
\end{align}
for $k \in \mathbb{N}_{[0,T]}$.
\label{def:valuefunc}
\end{definition}

Moreover, analogous to the findings in~\cite{voi, voi2, touraj-thesis}, we introduce in the next definition the general formula of the value of information.
\begin{definition}[Value of Information]
The value of information at time $k$ is defined as the variation in the value function $V(k,\mathcal{I}(k))$ with respect to the sensory measurement $\check{x}(k)$ that can be communicated to the decoder at time~$k$, i.e.,
\begin{align}\label{eq:voi-def}
\voi(k,\mathcal{I}(k)) := V(k,\mathcal{I}(k))|_{\sigma(k) = 0} - V(k,\mathcal{I}(k))|_{\sigma(k) = 1}
\end{align}
where $V(k,\mathcal{I}(k))|_{\sigma(k)}$ denotes the value function $V(k,\mathcal{I}(k))$ when the transmission decision $\sigma(k)$ is enforced.
\label{def:voi}
\end{definition}

Let us define the variables $\chi(k) := \EXP[ \tilde{e}(k+d)^T \Lambda(k+d) \tilde{e}(k+d) + V(k+1,\mathcal{I}(k+1))| \mathcal{I}(k), \sigma(k) = 0] - \EXP[ \tilde{e}(k+d)^T \Lambda(k+d) \tilde{e}(k+d) + V(k+1,\mathcal{I}(k+1))| \mathcal{I}(k),\sigma(k) = 1]$. In the next theorem, we characterize a globally optimal solution in the causal tradeoff between the packet rate and the mean square error.
\begin{theorem}[\hspace{-0.01mm}\cite{erasure2023}]
The causal tradeoff between the packet rate and the mean square error in (\ref{eq:main_problem1}) pertaining to state estimation of a partially observable process modeled by (\ref{eq:sys}) and (\ref{eq:sens}) over a lossy and delayed channel modeled by (\ref{eq:ch-model}) admits a globally optimal solution $(\epsilon^{\star},\delta^{\star})$ such that $\epsilon^{\star}$ is a symmetric threshold scheduling policy given~by
\begin{align}\label{eq:opt-mac-policy}
	\sigma(k) = \mathds{1}_{ \voi(k,\mathcal{I}(k)) \geq 0}
\end{align}
for $k \in \mathbb{N}_{[0,T-d]}$, where $\voi(k,\mathcal{I}(k)) =  \chi(k) - \theta(k)$ is a symmetric function of $\nu(0\!:\!k)$ and expressible in terms of $\tilde{e}(k)$, $\nu(k-d+2\!:\!k)$, $\lambda(k-d+1\!:\!k)$, and $\sigma(k-d+1\!:\!k-1)$; and $\delta^\star$ is a non-Gaussian linear estimation policy given~by
\begin{align}\label{eq:opt-est-policy}
	&\hat{x}(k) =  \sigma(k-d) \gamma(k-d) \bigg( \prod_{t=1}^{d} A(k-t) \bigg) \check{x}(k-d) \nonumber\\[0.5\jot]
	&\qquad \qquad + \big(1 - \sigma(k-d) \gamma(k-d) \big) A(k-1) \hat{x}(k-1)
\end{align}
for $k \in \mathbb{N}_{[d,T]}$ without being influenced by the signaling residual $\varsigma(k-1)$, with initial conditions $\hat{x}(\tau) = (\prod_{t=1}^{\tau} A(\tau-t) ) m(0)$ for $\tau \in \mathbb{N}_{[0,d-1]}$.
\label{thm:1}
\end{theorem}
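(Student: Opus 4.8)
The plan is to establish person-by-person optimality through dynamic programming and then lift it to global optimality via a symmetry-preservation argument that neutralizes the signaling effect. First I would fix the estimation policy at the candidate $\delta^\star$ in (\ref{eq:opt-est-policy}) and write the Bellman recursion for the value function $V(k,\mathcal{I}(k))$ of Definition~\ref{def:valuefunc}. The per-stage cost splits into the communication term $\theta(k)\sigma(k)$ and a quadratic estimation term, and since the encoder's own error $x(k+d)-\check{x}(k+d)$ is orthogonal to and unaffected by the decision $\sigma(k)$, only the mismatch $\tilde{e}(k+d)$ feels the scheduling. The encoder's optimization at stage $k$ therefore reduces to comparing the expected cost-to-go under $\sigma(k)=0$ with that under $\sigma(k)=1$; this difference is exactly $\chi(k)$, so the minimizer is $\sigma(k)=\mathds{1}_{\chi(k)-\theta(k)\geq 0}$, which by Definition~\ref{def:voi} is the threshold rule (\ref{eq:opt-mac-policy}).

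The crux, and the main obstacle, is to show that this threshold policy is self-consistent with the signaling-free estimator, i.e.\ that the residual $\varsigma(k-1)$ of Lemma~\ref{lemma:x2} vanishes; the fixed $d$-step delay, which keeps several transmissions in flight at once, is what makes this step delicate. I would argue it by backward induction on the claim that, conditioned on the non-reception event $\sigma(k-d)\gamma(k-d)=0$, the error $\hat{e}(k-1)$ has a law symmetric about the origin. The base case follows from the Gaussian initialization. For the inductive step I would use that the innovations $\nu(k)$ from the Kalman filter of Lemma~\ref{lemma:1} are zero-mean Gaussian and enter $\tilde{e}$ and $\hat{e}$ linearly through the dynamics, while the scheduling rule depends on $\mathcal{I}(k)$ only through a function that is even in $\nu(0\!:\!k)$. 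A sign flip of the innovation sequence then maps the non-reception event to itself and negates $\hat{e}$, forcing the conditional mean to zero. Hence $\varsigma(k-1)=A(k-1)\EXP[\hat{e}(k-1)\,|\,\mathcal{J}(k-1),\sigma(k-d)\gamma(k-d)=0]=0$, and the estimator of Lemma~\ref{lemma:x2} collapses to (\ref{eq:opt-est-policy}); this closes the loop, since the symmetry that kills the residual is precisely what the symmetric threshold policy supplies.

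With self-consistency in hand, I would read off the structural claims on $\voi(k,\mathcal{I}(k))$. Symmetry in $\nu(0\!:\!k)$ is inherited from the sign-flip invariance above: the non-reception events are preserved and the quadratic estimation cost is even, so $\chi(k)$, and hence the value of information, is an even function of the innovation history. The reduction to $3d-1$ sufficient variables follows from unrolling the decoder recursion (\ref{eq:opt-est-policy}) over the delay window, observing that only transmissions and losses within the last $d$ steps can still be in flight; thus $\tilde{e}(k)$ together with $\nu(k-d+2\!:\!k)$, $\lambda(k-d+1\!:\!k)$, and $\sigma(k-d+1\!:\!k-1)$, a total of $1+(d-1)+d+(d-1)=3d-1$ quantities, determine the conditional law of $\tilde{e}(k+d)$ and therefore $\chi(k)$.

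Finally, to upgrade person-by-person optimality to global optimality I would show the candidate profile attains a matching lower bound. Fixing the symmetric estimator, the scheduling subproblem is a genuine Markov decision problem whose optimizer is the threshold rule derived above; fixing that scheduler, the minimum-mean-square-error estimator is unique and coincides with $\delta^\star$ by Lemma~\ref{lemma:0}. The remaining delicate point is ruling out a strictly better asymmetric profile, for which I would invoke a majorization argument in the spirit of the symmetry reduction, showing that any asymmetric perturbation of the scheduler either raises the packet rate or inflates the conditional error covariance, so that no profile outside the symmetric threshold class can lower $\Phi$.
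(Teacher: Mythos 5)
Your first three steps track the architecture this line of work actually uses, and they are sound as far as they go: fixing the signaling-free estimator and running backward dynamic programming on $V(k,\mathcal{I}(k))$ yields the threshold rule $\sigma(k)=\mathds{1}_{\chi(k)-\theta(k)\geq 0}$ essentially by Definition~\ref{def:voi}; the sign-flip argument on the innovations is the right mechanism for killing the residual $\varsigma(k-1)$ of Lemma~\ref{lemma:x2} (note, though, that the residual-vanishing claim is a \emph{forward}-in-time induction on the conditional law of the error, while the value-function symmetry is a \emph{backward} induction, and a rigorous proof must interleave the two rather than run a single backward pass, which you only gesture at); and your count $1+(d-1)+d+(d-1)=3d-1$ of sufficient statistics is exactly the theorem's claim. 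For the record, the chapter itself does not prove Theorem~\ref{thm:1}; it imports it from \cite{erasure2023}, so the comparison here is against that reference's argument, with which your first three steps are broadly aligned.

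The genuine gap is your last paragraph. What you establish before it is person-by-person optimality: $(\epsilon^\star,\delta^\star)$ is a fixed point of the best-response map. In a team problem with non-classical information structure and a signaling effect---which is precisely what the chapter emphasizes this problem is---person-by-person optimality does not imply global optimality, and the profiles you must exclude are not merely ``asymmetric schedulers paired with the symmetric estimator'' but asymmetric schedulers paired with \emph{their own} MMSE decoders, which carry a nonzero residual and therefore genuinely profit from signaling. Your proposed fix, a majorization argument, is borrowed from the Lipsa--Martins and Chakravorty--Mahajan line, but that machinery is developed for scalar sources with symmetric, unimodal noise; Theorem~\ref{thm:1} is stated for multi-dimensional Gauss--Markov sources, where no such majorization apparatus is available---this is exactly why the multi-dimensional case with signaling required a different proof technique in the cited work. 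The missing argument must show that, for every admissible scheduler, the estimation gain its MMSE decoder extracts from the residual can never outweigh what that scheduler loses relative to the candidate profile in rate plus distortion; asserting that an asymmetric perturbation ``either raises the packet rate or inflates the conditional error covariance'' assumes away precisely this tradeoff rather than proving it.
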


\begin{remark}
The results of Theorem~\ref{thm:1} have shown the existence of a globally optimal solution that is composed of a symmetric threshold scheduling policy and a non-Gaussian linear estimation policy. First, note that the scheduling policy is expressible in terms of $3d-1$ variables related to the source and the channel (i.e., $\tilde{e}(k)$, $\nu(k-d+2\!:\!k)$, $\lambda(k-d+1\!:\!k)$, and $\sigma(k-d+1\!:\!k-1)$ at each time $k$), and that the estimation policy incorporates no signaling residual (i.e., $\varsigma(k)$ becomes equal to zero for all $k \in \mathbb{N}_{[d-1,T-1]}$). The former implies that the encoder does not require a large memory to save the whole history, and the latter implies that the decoder need not perform complex computations. Moreover, note that these scheduling and estimation policies can be designed separately, while we have already observed from (\ref{est-monitor}) and (\ref{eq:Ve-def}) that the optimal design of the encoder and the decoder is generally intertwined as $\epsilon$ depends on $\delta$ and vice versa. Finally, note that, at the characterized policy profile, the benefit in transmitting a message is equal to $\chi(k)$ and its cost is equal to $\theta(k)$. Accordingly, a message is transmitted from the encoder to the decoder at each time only when its benefit surpasses its cost. This event-triggered property of the networked system based on the value of information has directly emerged from our cost-benefit analysis.
\end{remark}

\section{Transition from State Estimation to Feedback Control}\label{sec4}
We now pivot towards feedback control in the state space, which leverages knowledge of the process's internal states in order to regulate its behavior. Clearly, the accuracy of state estimation directly influences the efficacy of feedback control. Therefore, a transition from the former to the latter in fact requires a synergy between information and action, which is crucial for achieving optimal performance in the presence of uncertainties. In the following, we focus on feedback control of a partially observable process over a lossy and delayed channel by considering simultaneously costs of communication and regulation.

Suppose that the dynamical process discussed in Section (\ref{sec2}) now needs to be regulated. Accordingly, the source model is updated as
\begin{align}
	x(k+1) &= A(k) x(k) + B(k) u(k) + w(k)\label{eq:sys2}\\[2\jot]
	y(k) &= C(k) x(k) + v(k) \label{eq:sens2}
\end{align}
for $k \in \mathbb{N}_{[0,T]}$ with initial condition $x(0)$, where $B(k) \in \mathbb{R}^{n \times m}$ is the input matrix and $u(k) \in \mathbb{R}^m$ is the actuation input. The channel model is defined as before, with the input-output relation in (\ref{eq:ch-model}).

In this feedback control scenario, the decision variables are $\sigma(k)$ and $u(k)$ for all $k \in \mathbb{N}_{[0,T]}$, which are decided by the encoder and the decoder, respectively. Let the information sets of the encoder and the decoder be expressed by $\mathcal{I}(k) = \{ y(t), z(t), \lambda(t), \sigma(s), u(s), \gamma(r) | \ t$ $\in \mathbb{N}_{[0,k]}, s \in \mathbb{N}_{[0,k-1]}, r \in \mathbb{N}_{[0,k-d]} \}$ and $\mathcal{J}(k) = \{ z(t), \lambda(t), u(s) \ | \  t \in \mathbb{N}_{[0,k]}, s \in \mathbb{N}_{[0,k-1]} \}$, respectively. We say that a policy profile $(\epsilon,\delta)$ consisting of a scheduling policy $\epsilon$ and a control policy $\delta$ is admissible if $\epsilon = \{ \ProbM(\sigma(k) | \mathcal{I}(k)) \}_{k=0}^{T}$ and $\delta = \{ \ProbM(u(k) | \mathcal{J}(k)) \}_{k=0}^{T}$, where $\ProbM(\sigma(k) | \mathcal{I}(k))$ and $\ProbM(u(k) | \mathcal{J}(k))$ are stochastic kernels.

We would like to find a globally optimal solution $(\epsilon^\star,\delta^\star)$ to the causal tradeoff between the packet rate and the regulation cost, which is cast by the following stochastic optimization problem:
\begin{align}\label{eq:main_problem2}
	&\underset{\epsilon \in \mathcal{E},\delta\in \mathcal{D}}{\minimize} \ \Phi'
\end{align}
where now $\mathcal{E}$ and $\mathcal{D}$ are the sets of admissible scheduling policies and admissible control policies, respectively, and
\begin{align}\label{eq:loss-function2}
\Phi' := \EXP \bigg[ \sum_{k=0}^{T} \theta(k) \sigma(k) + \sum_{k=0}^{T+1} x(k)^T Q(k) x(k) + \sum_{k=0}^{T} u(k)^T R(k) u(k) \bigg]
\end{align}
for the weighting coefficient $\theta(k) \geq 0$ and the weighting matrices $Q(k) \succeq 0$ and $R(k) \succ 0$, subject to the source model in (\ref{eq:sys2}) and (\ref{eq:sens2}) and the channel model in (\ref{eq:ch-model}).

The next lemma introduces an equivalent loss function based on an algebraic Riccati equation.
\begin{lemma}[\hspace{-0.01mm}\cite{stoccontrol}]
Let $S(k) \succeq 0$ be a matrix obeying the algebraic Riccati equation
\begin{equation}\label{eq:riccati}
\begin{aligned}
S(k) &= Q(k) + A(k)^T S(k+1) A(k) - A(k)^T S(k+1) B(k)\\[2.5\jot]
	&\qquad \quad \times (B(k)^T S(k+1) B(k) + R(k))^{-1} B(k)^T S(k+1) A(k)
\end{aligned}
\end{equation}
for $k \in \mathbb{N}_{[0,T]}$ with initial condition $S(T+1) = Q(T+1)$. Then,
\begin{align}\label{eq:psi}
	\Psi := \EXP \Big[ \sum_{k=0}^{T} \theta(k) \sigma(k) + \eta(k) \Big]
\end{align}
is equivalent to $\Phi$, where $\eta(k) =  (u(k) + L(k) x(k))^T (B(k)^T S(k+1) B(k) + R(k)) (u(k) + L(k) x(k))$ and $L(k) = (B(k)^T S(k+1) B(k) + R(k))^{-1} B(k)^T S(k+1) A(k)$.
\label{lem:4}
\end{lemma}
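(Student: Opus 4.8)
The plan is to establish the claim by the classical completion-of-squares argument of linear-quadratic control, in which the algebraic Riccati equation (\ref{eq:riccati}) is used to convert the running state-and-input cost into a perfect square in the control deviation $u(k) + L(k) x(k)$ up to terms that either telescope or are statistically independent of the decision variables. Throughout, write $G(k) := B(k)^T S(k+1) B(k) + R(k) \succ 0$, so that $L(k) = G(k)^{-1} B(k)^T S(k+1) A(k)$ and $\eta(k) = (u(k) + L(k) x(k))^T G(k) (u(k) + L(k) x(k))$; I would first rewrite (\ref{eq:riccati}) in the compact form $S(k) = Q(k) + A(k)^T S(k+1) A(k) - L(k)^T G(k) L(k)$, using the identity $A(k)^T S(k+1) B(k) = L(k)^T G(k)$. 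The notion of equivalence to be proved is that $\Psi$ and the control loss $\Phi'$ in (\ref{eq:loss-function2}) differ by an additive constant that does not depend on $\sigma$ or $u$, so that the two induce the same minimizers in (\ref{eq:main_problem2}).

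The core computation is a per-stage identity obtained from the telescoping quantity $x(k)^T S(k) x(k)$. I would substitute the controlled dynamics (\ref{eq:sys2}) into $x(k+1)^T S(k+1) x(k+1)$ and expand; substituting $A(k)^T S(k+1) A(k) = S(k) - Q(k) + L(k)^T G(k) L(k)$ from the compact Riccati equation into $x(k)^T S(k) x(k)$ and adding the input cost $u(k)^T R(k) u(k)$, the quadratic-in-$(x,u)$ terms collect exactly into $\eta(k)$ by the completion of the square. Suppressing time indices on the right, this yields
\begin{align*}
x(k)^T Q(k) x(k) + u(k)^T R(k) u(k) &= x(k)^T S(k) x(k) - x(k+1)^T S(k+1) x(k+1) + \eta(k) \\
&\quad + w^T S(k+1) w + 2 x^T A^T S(k+1) w + 2 u^T B^T S(k+1) w,
\end{align*}
where $w = w(k)$, $x = x(k)$, $u = u(k)$, $A = A(k)$, and $B = B(k)$.

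Summing this identity over $k \in \mathbb{N}_{[0,T]}$, the first two terms on the right telescope to $x(0)^T S(0) x(0) - x(T+1)^T S(T+1) x(T+1)$, and because $S(T+1) = Q(T+1)$ the boundary term at $T+1$ cancels the terminal state cost $x(T+1)^T Q(T+1) x(T+1)$ present in $\Phi'$. Taking expectations, the initial term gives the constant $\EXP[x(0)^T S(0) x(0)] = \tr(S(0) M(0)) + m(0)^T S(0) m(0)$, and each noise quadratic gives the constant $\EXP[w(k)^T S(k+1) w(k)] = \tr(S(k+1) W(k))$; none of these depends on the policies. What remains, and what I expect to be the main obstacle, is to show that the two noise cross terms vanish in expectation, since this is precisely where the causal (non-anticipative) structure of admissible policies enters.

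To dispatch the cross terms I would argue that $w(k)$ has zero mean and is independent of both $x(k)$ and $u(k)$. By Assumption~1 the process noise $w(k)$ is white and independent of $x(0)$ and of all $w(s), v(s)$ with $s \neq k$, so $x(k)$---a function of $x(0)$, $w(0\!:\!k-1)$, and past inputs---is independent of $w(k)$. The more delicate point is the control: by admissibility $u(k)$ is $\mathcal{J}(k)$-measurable, and every generator of $\mathcal{J}(k)$, namely the delayed channel outputs $z(t)$, the channel parameters $\lambda(t)$, and the past inputs $u(s)$ with $s \le k-1$, depends on the source noise only through $w(0\!:\!k-1)$ (indeed only through $w(0\!:\!k-d-1)$ because of the $d$-step delay in (\ref{eq:ch-model})), while the channel variables are independent of the source by the factorization in Assumption~3; hence $u(k)$ is independent of $w(k)$ as well. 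Conditioning on $(x(k), u(k))$ and using $\EXP[w(k)] = 0$ then gives $\EXP[x(k)^T A(k)^T S(k+1) w(k)] = 0$ and $\EXP[u(k)^T B(k)^T S(k+1) w(k)] = 0$. Collecting everything, the expected running cost equals $\EXP[\sum_{k=0}^{T} \eta(k)]$ plus a policy-independent constant, so $\Phi' = \Psi + \mathrm{const}$, which is the asserted equivalence.
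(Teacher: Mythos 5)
Your proof is correct. Note that the paper itself offers no proof of this lemma---it is imported verbatim from the stochastic-control reference \cite{stoccontrol}---and your completion-of-squares argument with the telescoping sum of $x(k)^T S(k) x(k)$ is precisely the standard textbook derivation that such references use, so there is no substantive divergence to report. Two details you handled well and that deserve to be flagged as the only nontrivial points: first, you correctly read ``equivalent to $\Phi$'' as equivalence to the control loss $\Phi'$ of (\ref{eq:loss-function2}) up to an additive constant $\EXP[x(0)^T S(0) x(0)] + \sum_{k=0}^{T}\tr(S(k+1)W(k))$ that is independent of $(\epsilon,\delta)$, which is the intended meaning; second, the vanishing of the cross terms $\EXP[x(k)^T A(k)^T S(k+1) w(k)]$ and $\EXP[u(k)^T B(k)^T S(k+1) w(k)]$ is the only place where the networked information structure matters, and your argument---that $u(k)$ is $\mathcal{J}(k)$-measurable, that $\mathcal{J}(k)$ depends on the source noise only through $w(0\!:\!k-d-1)$ by the $d$-step delay in (\ref{eq:ch-model}), and that the channel variables are independent of the source by Assumption~3---is exactly what is needed to make the classical LQG computation go through in this setting.
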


Define the variables $n(k) := B(k-1) u(k-1)$, $r(k) := \sigma(k-d) \gamma(k-d) \sum_{t=1}^{d-1} ( \prod_{t'=1}^{t-1} A(k-t') ) B(k-t) u(k-t) + ( 1 - \sigma(k-d) \gamma(k-d)) B(k-1) u(k-1)$, and $\Gamma(k) := A(k)^T S(k+1) B(k) (B(k)^T S(k+1) B(k) + R(k))^{-1} B(k)^T S(k+1) A(k)$. In the next theorem, we characterize a globally optimal solution in the causal tradeoff between the packet rate and the regulation cost.
\begin{theorem}
The causal tradeoff between the packet rate and the regulation cost in (\ref{eq:main_problem2}) pertaining to feedback control of a partially observable process modeled by (\ref{eq:sys2}) and (\ref{eq:sens2}) over a lossy and delayed channel modeled by (\ref{eq:ch-model}) admits a globally optimal solution $(\epsilon^{\star},\delta^{\star})$ such that $\epsilon^{\star}$ is a symmetric threshold scheduling policy given~by (14) when $m(k) = n(k)$ and $\Lambda(k) = \Gamma(k)$; and $\delta^{\star}$ is a certainty-equivalent control policy given by
\begin{align}\label{eq:optimal-control}
u(k) = - L(k) \hat{x}(k)
\end{align}
for $k \in \mathbb{N}_{[0,T]}$, where
\begin{align}
	&\hat{x}(k) = \sigma(k-d) \gamma(k-d) \bigg(\prod_{t=1}^{d} A(k-t) \bigg) \check{x}(k-d)\nonumber\\[0.5\jot]
	&\qquad \qquad + \big( 1 - \sigma(k-d) \gamma(k-d) \big) A(k-1) \hat{x}(k-1) \nonumber\\[0.5\jot]
		&\qquad \qquad + \sigma(k-d) \gamma(k-d) \sum_{t=1}^{d} \bigg( \prod_{t'=1}^{t-1} A(k-t') \bigg) B(k-t) u(k-t) \nonumber\\[0.5\jot]
	&\qquad \qquad + \big( 1 - \sigma(k-d) \gamma(k-d) \big) B(k-1) u(k-1)
\end{align}
for $k \in \mathbb{N}_{[d,T]}$ without being influenced by the signaling residual $\varsigma(k-1)$, with initial conditions $\hat{x}(\tau) = (\prod_{t=1}^{\tau} A(\tau-t) ) m(0) + \sum_{t=1}^{\tau}(\prod_{t'=1}^{t-1} A(\tau - t')) B(\tau - t) u(\tau-t)$ for $\tau \in \mathbb{N}_{[0,d-1]}$.
\label{thm:2}
\end{theorem}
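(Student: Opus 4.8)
The plan is to reduce the control tradeoff to the estimation tradeoff already settled in Theorem~\ref{thm:1}, by invoking the separation principle. First I would apply Lemma~\ref{lem:4} to replace $\Phi'$ by the equivalent cost $\Psi$ of (\ref{eq:psi}), so that the running penalty becomes the per-stage quadratic $\eta(k) = (u(k)+L(k)x(k))^T (B(k)^T S(k+1) B(k) + R(k)) (u(k)+L(k)x(k))$. The communication term $\sum_k \theta(k)\sigma(k)$ is untouched by the actuation, so the control policy influences $\Psi$ only through $\sum_k \EXP[\eta(k)]$, and the joint minimization can be written as $\min_{\epsilon} \min_{\delta} \Psi$.

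Second, for any fixed admissible scheduling policy I would minimize $\sum_k \EXP[\eta(k)]$ over the control policy by backward induction. Since $B(k)^T S(k+1) B(k) + R(k) \succ 0$, the stagewise minimizer of $\EXP[\eta(k)\,|\,\mathcal{J}(k)]$ is the orthogonal projection $u(k) = -L(k)\EXP[x(k)\,|\,\mathcal{J}(k)] = -L(k)\hat{x}(k)$, which is (\ref{eq:optimal-control}). The structural fact that makes the backward recursion collapse to this certainty-equivalent form, i.e. that no dual effect is present, is that the decoder estimation error $\hat{e}(k) = x(k) - \hat{x}(k)$ is invariant to the control policy. To establish this I would substitute the dynamics (\ref{eq:sys2}) and the estimator of Theorem~\ref{thm:2} into $\hat{e}(k)$: on the event $\sigma(k-d)\gamma(k-d)=1$ the accumulated control contribution $\sum_{t=1}^{d}\bigl(\prod_{t'=1}^{t-1}A(k-t')\bigr)B(k-t)u(k-t)$ cancels between $x(k)$ and $\hat{x}(k)$, leaving $\hat{e}(k) = \bigl(\prod_{t=1}^{d}A(k-t)\bigr)(x(k-d)-\check{x}(k-d)) + (\text{noise})$, and on the complementary event the term $B(k-1)u(k-1)$ cancels, leaving $\hat{e}(k) = A(k-1)\hat{e}(k-1)+w(k-1)$. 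This cancellation hinges on the controls being known at both ends (available at the decoder by construction and at the encoder because $\mathcal{J}(k)\subset\mathcal{I}(k)$), so $\hat{e}(k)$ obeys exactly the recursion of the pure-estimation setting and is unaffected by $u(t)$ for $t<k$.

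Third, under the certainty-equivalent control, completing the square gives $\eta(k) = \hat{e}(k)^T L(k)^T (B(k)^T S(k+1)B(k)+R(k)) L(k)\hat{e}(k) = \hat{e}(k)^T \Gamma(k)\hat{e}(k)$, using $L(k) = (B(k)^T S(k+1)B(k)+R(k))^{-1}B(k)^T S(k+1)A(k)$. Because $\hat{e}(k)$ is control-free, the residual minimization over the scheduling policy becomes $\min_{\epsilon}\EXP\bigl[\sum_k \theta(k)\sigma(k) + \hat{e}(k)^T \Gamma(k)\hat{e}(k)\bigr]$, which is precisely the estimation tradeoff of (\ref{eq:loss-function}) under the identifications $\Lambda(k)=\Gamma(k)$ for the error weight and $m(k)=n(k)$ for the control offset entering the one-step predictor. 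I would then apply Theorem~\ref{thm:1} verbatim to this reduced problem: its symmetric threshold policy (\ref{eq:opt-mac-policy}) delivers $\epsilon^\star$, and its estimator (\ref{eq:opt-est-policy}), augmented by the known control terms $r(k)$ and with the signaling residual $\varsigma(k-1)$ vanishing by the same argument as in Theorem~\ref{thm:1}, delivers the stated $\hat{x}(k)$ recursion and its initial conditions.

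I expect the main obstacle to be the second step, namely rigorously ruling out the dual effect of control. One must verify not only that each $\eta(k)$ is minimized by $u(k)=-L(k)\hat{x}(k)$, but that the choice of $u(k)$ does not alter the decoder's information content and hence the future errors $\hat{e}(t)$ for $t>k$; otherwise the greedy stagewise minimization would not be globally optimal over the non-classical information structure with signaling. The control-invariance of $\hat{e}$ established above, combined with the fact that the signaling term was already handled in Theorem~\ref{thm:1}, is exactly what secures the decoupling and permits the scheduling and control designs to separate.
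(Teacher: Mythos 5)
Your proposal is correct and follows essentially the same route as the paper: pass from $\Phi'$ to $\Psi$ via Lemma~\ref{lem:4}, insert the certainty-equivalent control $u(k) = -L(k)\hat{x}(k)$ so that $\eta(k) = \hat{e}(k)^T \Gamma(k)\, \hat{e}(k)$, thereby reducing the problem to the estimation tradeoff with $\Lambda(k) = \Gamma(k)$, and then invoke Theorem~\ref{thm:1} with the control terms $n(k)$ and $r(k)$ folded into the encoder and decoder estimators of Lemmas~\ref{lemma:1} and~\ref{lemma:x2}. The only difference is that you prove the separation/no-dual-effect step explicitly (backward induction plus control-invariance of $\hat{e}(k)$, using the cancellation of known control terms), whereas the paper outsources exactly this step to the separation-principle analysis of \cite{voi2}.
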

\begin{proof}
The derivation of the control policy follows directly from the analysis in \cite{voi2}, where it was shown that the separation principle holds. Inserting the actuation input $u(k) = - L(k) \hat{x}(k)$ into (\ref{eq:psi}), we get $\eta(k) = (x(k) - \hat{x}(k))^T \Gamma(k) (x(k) - \hat{x}(k))$. This implies that $\Psi$ in Lemma~\ref{lem:4}, which is equivalent to $\Phi'$, will be the same as $\Phi$ if $\Lambda(k) = \Gamma(k)$. Moreover, the associated minimum mean-square-error estimators at the encoder and the decoder are obtained according to Lemmas \ref{lemma:1} and \ref{lemma:x2} when the previously applied actuation inputs are incorporated. This means adding $n(k)$ to (\ref{eq:est-KF-m}), adding $r(k)$ to (\ref{est-monitor}), and updating the initial conditions $\hat{x}(\tau) = (\prod_{t=1}^{\tau} A(\tau-t) ) m(0) + \sum_{t=1}^{\tau}(\prod_{t'=1}^{t-1} A(\tau - t')) B(\tau - t) u(\tau-t)$ for $\tau \in \mathbb{N}_{[0,d-1]}$. Finally, the derivation of the optimal scheduling policy follows from the results of Theorem~\ref{thm:1} when $\Lambda(k) = \Gamma(k)$.
\end{proof}

\begin{remark}
The results of Theorem \ref{thm:2} have shown that the optimal control policy is a certainty-equivalent policy with a switching filter equation. Note that this control policy assumes that uncertainties can be replaced by their expected values, essentially converting the stochastic system into a deterministic one. Such a property of course offers simplifications and computational advantages. Moreover, it is important to note that the optimal scheduling policy has remained structurally the same as the one proposed earlier in Theorem~\ref{thm:1}. In particular, this scheduling policy is still expressible in terms of $3d-1$ variables related to the source and the channel (i.e., $\tilde{e}(k)$, $\nu(k-d+2\!:\!k)$, $\lambda(k-d+1\!:\!k)$, and $\sigma(k-d+1\!:\!k-1)$ at each time $k$).
\end{remark}

\begin{figure}[t!]
\centering
  \includegraphics[width=.83\linewidth]{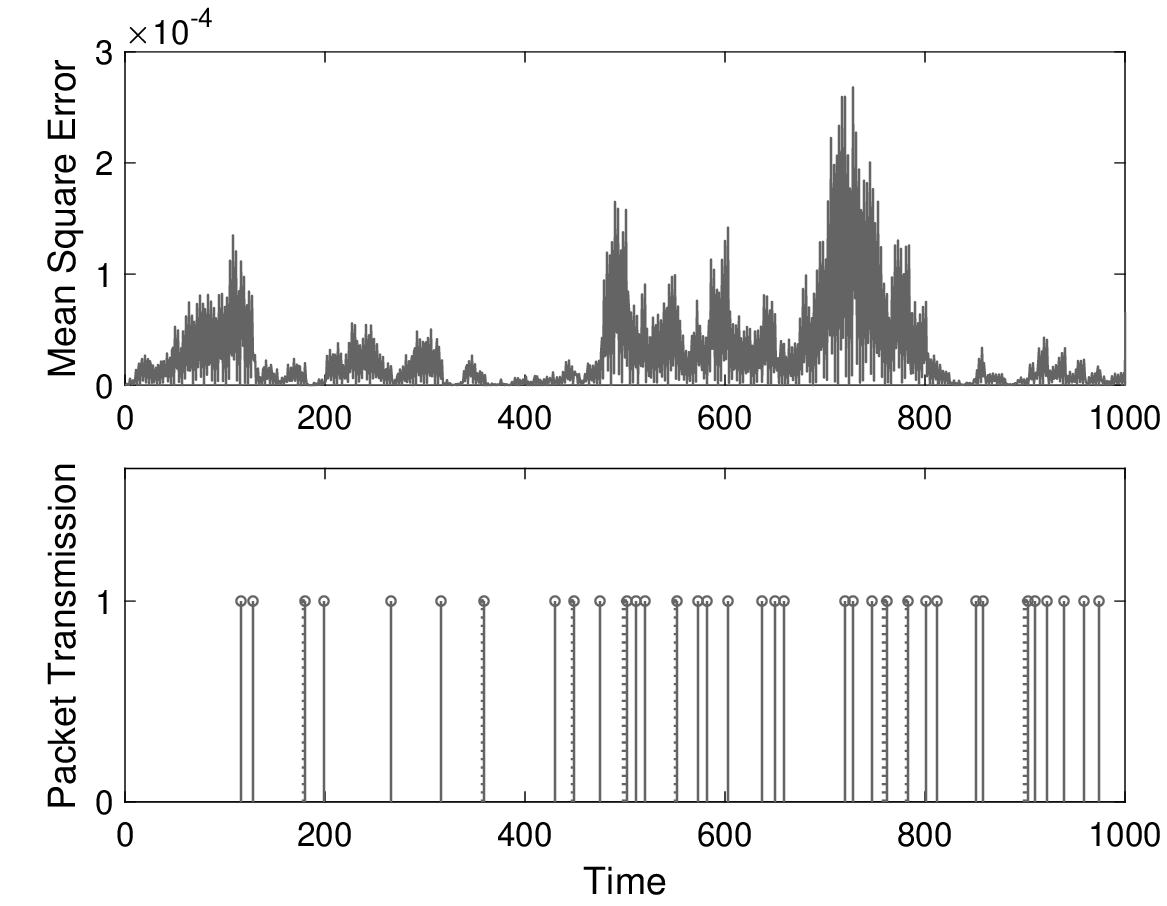}
  \caption{Mean square error and packet transmission trajectories under the optimal policy profile. The total mean square error is $0.0536$, and the total number of packet transmissions is $46$, out of which $11$ were lost. The solid lines represent successful deliveries, and the dotted lines represent packet losses.}
  \label{fig:trajec-voi}
\end{figure}

\begin{figure}[t!]
\centering
  \includegraphics[width=.83\linewidth]{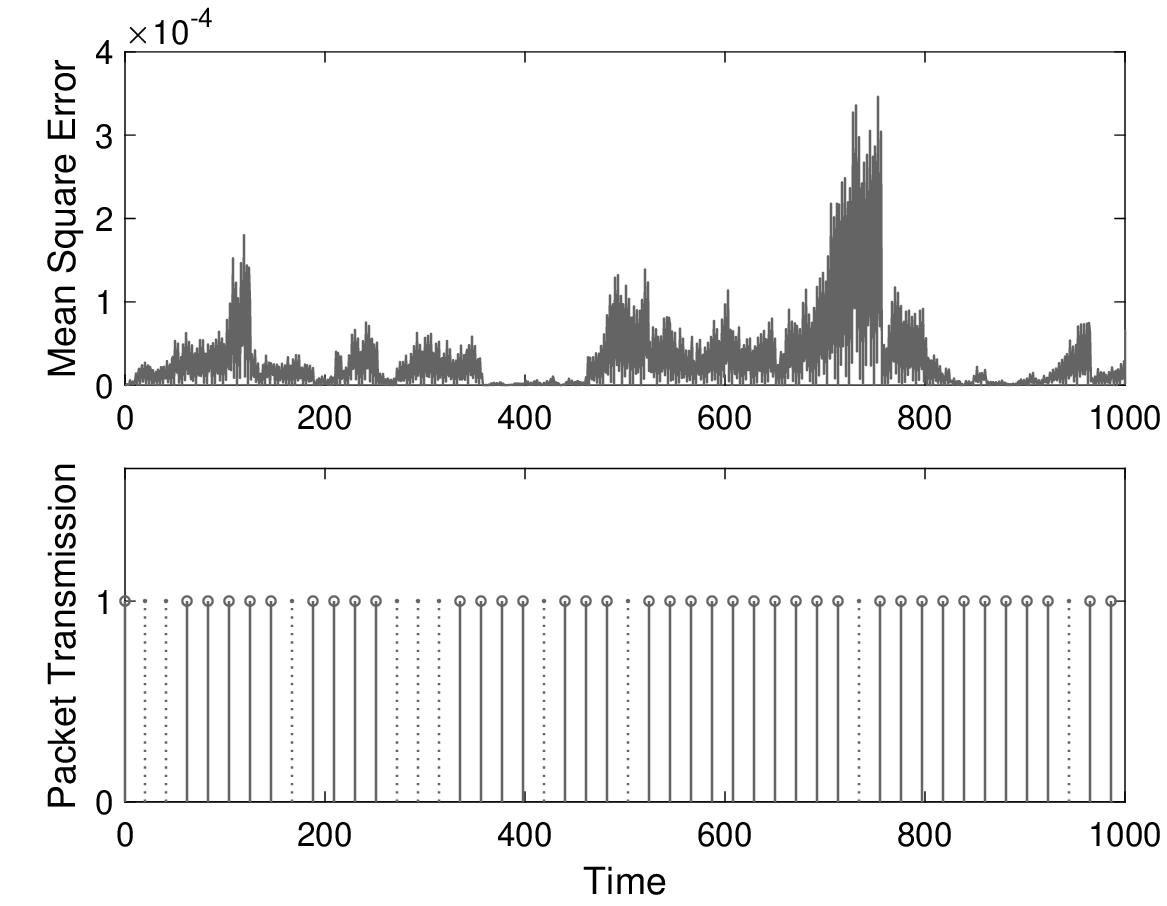}
  \caption{Mean square error and packet transmission trajectories under a periodic policy profile. The total mean square error is $0.0610$, and the total number of packet transmissions is $48$, out of which $10$ were lost. The solid lines represent successful deliveries, and the dotted lines represent packet losses.}
  \label{fig:trajec-per}
\end{figure}

\section{Numerical Example}\label{sec5}
In this section, we provide a numerical example to demonstrate how the framework developed in the previous sections works. Consider a spin-stabilized spacecraft whose body is spinning about the $z$-axis, i.e., the axis of symmetry, with a constant angular velocity $\omega_z = \omega_0$. For such a vehicle, the Euler equation is written as
\begin{align*}\setlength\arraycolsep{6pt}\def\arraystretch{1.5}
\begin{bmatrix}
 \dot{\omega}_x\\
 \dot{\omega}_y\\
 \dot{\omega}_z     
 \end{bmatrix}
 =
\begin{bmatrix}
 0 & \frac{I_y - I_z}{I_x} \omega_0 & 0\\
 \frac{I_z - I_x}{I_y} \omega_0 & 0 & 0 \\
 0 & 0 & 0   
 \end{bmatrix}
\begin{bmatrix}
 \omega_x\\
 \omega_y\\
 \omega_z     
 \end{bmatrix} 
+
\begin{bmatrix}
e_x\\
e_y \\
e_z	
\end{bmatrix}
\end{align*}
where $(\omega_x,\omega_y,\omega_z)$ is the angular velocity, $(I_x,I_y,I_z)$ is the moment of inertia, and $(e_x,e_y,e_z)$ is a Gaussian disturbance torque acting on the spacecraft. Note that for spin stability, the spacecraft must be spinning either about the major or minor axis of inertia. In our example, the following parameters are used: $I_x = I_y = 20 \ \text{kg.m$^2$}$, $I_z = 100 \ \text{kg.m$^2$}$, and $\omega_0 = 2 \epsilon \ \text{rad/s}$; and the Euler equation is discretized over the time horizon $T = 1000$. The state equation of the form (\ref{eq:sys}) is specified by $A(k) = [0.4258, 0.4258, 0; 0.4258, 0.4258, 0; 0, 0 , 1]$ and $W(k) = 10^{-6} \diag\{0.2245,0.2245,0.0025\}$ for all $k \in \mathbb{N}_{[0,T]}$ with $m(0) = [0;0;2\pi]$ and $M(0) = 10W(k)$. Suppose there is a sensor on the spacecraft that partially observes each component of the angular velocity at each time $k$. The output equation of the form (\ref{eq:sens}) is specified by $C(k) = \diag \{1,1,1\}$ and $V(k) = 10^{-3} \diag \{ 1, 1, 1\}$ for all $k \in \mathbb{N}_{[0,T]}$. The sensory measurements should be transmitted over a costly downlink channel to a ground station where the angular velocity is estimated. The downlink channel is subject to packet loss with $\lambda(k) = 0.3$ for all $k \in \mathbb{N}_{[0,T]}$ and time delay with $d = 2$.

For this networked system, we are interested in finding the optimal policy profile $(\epsilon^\star,\delta^\star)$ that minimizes the loss function $\Phi$ in the causal tradeoff between the packet rate and the mean square error with the weighting coefficient $\theta(k) = 8 \times 10^{-6}$ for $k \in \mathbb{N}_{[0,T/2]}$ and $\theta(k) = 6 \times 10^{-6}$ for $k \in \mathbb{N}_{[T/2+1,T]}$, and the weighting matrix $\Lambda(k) = \diag \{ 1, 1, 1\}$ for all $k \in \mathbb{N}_{[0,T]}$. For a simulated realization of the system, Fig.~\ref{fig:trajec-voi} shows the mean square error and packet transmission trajectories under the optimal policy profile. In this case, the total mean square error is $0.0536$, and the total number of packet transmissions is $46$, out of which $11$ were lost. Besides, for the same realization of the system, Fig.~\ref{fig:trajec-per} shows the corresponding trajectories under a periodic policy profile. In this case, the total mean square error is $0.0610$, and the total number of packet transmissions is $48$, out of which $10$ were lost. We observe in this example that the optimal policy profile proved effective in improving the system performance. It is interesting to note that, in comparison with the periodic scheduling policy, the optimal scheduling policy not only transmits sensory information less frequently when the estimation discrepancy is small, but transmits more frequently and more persistently when the estimation discrepancy is large due to packet losses and time delay. Moreover, the optimal scheduling policy is adaptive to the weighting coefficient $\theta(k)$, and transmits more frequently when the cost of communication~decreases.

\section{Conclusions}\label{sec6}
In this chapter, we first studied the fundamental performance limit of state estimation of a partially observable process over a lossy and delayed channel. To that end, we formulated a causal tradeoff between the packet rate and the mean square error. Associated with this tradeoff, we characterized a globally optimal policy profile, and showed that this policy profile is composed of a symmetric threshold scheduling policy based on the value of information and a non-Gaussian linear estimation policy. We then extended the results to feedback control, and showed that the optimal control policy is a certainty-equivalent policy that exploits the derived estimation policy. Leveraging these results and quantifying the value of information according to the prior works, we have been able to validate the consistency of the value of information in a lossy and delayed communication regime. This comprehensive analysis contributes to our understanding of the interplay between information, estimation, and control in networked control systems operating under challenging communication conditions.

\bibliography{../../../../mybib}
\bibliographystyle{ieeetr}

\end{document}